\begin{document}

\title{Adaptive Computation of the \\ Discrete Fr\'echet Distance}
\author{J\'er\'emy Barbay}
\authorrunning{J. Barbay}   
%
\tocauthor{J\'er\'emy Barbay (Universidad de Chile)}
\institute{
  Departamento de Ciencias de la Computaci{\'o}n, \\
  University of Chile,  \\
\email{jeremy@barbay.cl}
}

\maketitle              


\begin{abstract}
The discrete Fr{\'e}chet distance is a measure of similarity between point sequences which permits to abstract differences of resolution between the two curves, approximating the original Fr{\'e}chet distance between curves. Such distance between sequences of respective length $n$ and $m$ can be computed in time within $O(nm)$ and space within $O(n+m)$ using classical dynamic programing techniques, a complexity likely to be optimal in the worst case over sequences of similar lenght unless the Strong Exponential Hypothesis is proved incorrect.  We propose a parameterized analysis of the computational complexity of the discrete Fr{\'e}chet distance in fonction of the area of the dynamic program matrix relevant to the computation, measured by its \emph{certificate width} $\omega$.  We prove that the discrete Fr{\'e}chet distance can be computed in time within $((n+m)\omega)$ and space within $O(n+m+\omega)$.  
\end{abstract}

\begin{INUTILE}
The discrete Fr{\'e}chet distance is a measure of similarity between point sequences which permits to abstract differences of resolution between the two curves, approximating the original Fr{\'e}chet distance between curves. Such distance between sequences of respective length $n$ and $m$ can be computed in time within $O(nm)$ and space within $O(n+m)$ using classical dynamic programing techniques, a complexity likely to be optimal in the worst case over sequences of similar lenght unless the Strong Exponential Hypothesis is proved incorrect.
We propose a parameterized analysis of the computational complexity of the discrete Fr{\'e}chet distance, both in term of its upper bound and of its conditional lower bound, in fonction of the area of the dynamic program matrix relevant to the computation, measured by its \emph{certificate width} $\omega$.  We prove that the discrete Fr{\'e}chet distance can be computed in time within $((n+m)\omega)$ and space within $O(n+m+\omega)$, and that this complexity is likely to be optimal in the worst case over sequences of similar lengths and certificate width, unless the Strong Exponential Hypothesis is proved incorrect.
\end{INUTILE}

\section{Introduction}


Measuring the similarity between two curves has applications in areas such as handwriting recognition~\cite{2007-ICDAR-FrechetDistanceBasedApproachForSearchingOnlineHandwrittenDocuments-Sriraghavendra-Karthik-Bhattacharyya}, protein structure alignment~\cite{2008-JBCB-ProteinStructureStructureAlginmentWithDiscreteFrechetDistance-JiangXuZhu},
quantifying macro-molecular pathways~\cite{2015-ARXIV-PathSimilarityAnalysisAMethodForQuantifyingMacromolecularPathways-SeylerKumarThorpeBeckstein}, morphing~\cite{2002-DCG-NewSimilarityMeasuresBetweenPolylinesWithApplicationsToMorphingAndPolygonSweeping-EfratGuibasHarPeledMitchelMurali}, movement analysis~\cite{2017-EGIS-MovementPatternsInSpatioTemporalData-DudmundssonLaubeWolle}, and many others~\cite{wikipedia:FrechetDistance}.
One of the most popular solutions, the \textsc{Fr{\'e}chet Distance} is a measure of similarity between two curves $\cal P$ and $\cal Q$, that takes into account the location and ordering of the points along the curves.  It permits, among other features, to abstract the difference of resolution between $\cal P$ and $\cal Q$, with application to morphing, handwriting recognition and protein structure alignment, among others~\cite{wikipedia:FrechetDistance}.  
In 1995, Art and Godau~\cite{1995-IJCGA-ComputingTheFrechetDistanceBetweenTwoPolygonalCurves-AltGodau} described an algorithm computing the \textsc{Fr{\'e}chet Distance} between two polygonal curves composed of $n$ and $m$ segments respectively in time within $O(mn\log(mn))$.

One year before (1994), Eiter and Mannila~\cite{1994-TR-ComputingDiscreteFrechetDistance-EiterMannila} had extended the notion of the \textsc{Fr{\'e}chet Distance} between curves to the \textsc{Discrete Fr{\'e}chet Distance} between sequence of points of respective sizes $n$ and $m$, demonstrating that the latter can be used to approximate the former in time within $O(nm)$ and space within $O(n+m)$ using classical dynamic programming techniques.
Two decades later (2014), Bringmann~\cite{2014-FOCS-WhyWalkingTheDogTakesTimeFrechetDistanceHasNoStronglySubquadraticAlgorithmsUnlessSETHFails-Bringmann} showed that this worst case complexity is likely to be optimal, unless a bunch of other problems (among which CNF SAT) can be computed faster than usually expected.
Hence, the bounds about the computational complexity of the \textsc{Discrete Fr\'echet Distance} in the worst case over instances of input sizes $n$ and $m$ are reasonably tight.

Yet, for various restricted classes of curves (e.g.  $\kappa$-bounded, backbone, $ c$-packed and long-edged~\cite{2018-IWISC-FastFrechetDistanceBetweenCurvesWithLongEdges-GudmundssonMirzanezhadMohadesWenk} curves), both the \textsc{Fr\'echet Distance} and the \textsc{Discrete Fr\'echet Distance} are known to be easier to compute.
\begin{INUTILE}
\begin{LONG}
A curve $\cal P$ is \emph{$\kappa$-bounded} if for any two points $x,y\in {\cal P}$, the union of the balls with radii $r$ centered at y and $y$ contains the whole ${\cal P}[x,y]$ where $r$ is equal to $(\kappa/2)$ times the Euclidean distance between $x$ and $y$.  Alt et al. described an algorithm to approximate within a factor of $\kappa$ the \textsc{Discrete Fr\'echet Distance} between two such curves of length at most $n$, in time within $O(n\lg n)$.
A \emph{backbone curve} has constant edge length and requires a minimum distance between non-consecutive vertices. Aronov et al. described an algorithm approximating within a factor of $(1{+}\varepsilon)$ the \textsc{Discrete Fr\'echet Distance} of such curves in time near-linear in the size of the curves.
A curve $\cal P$ is \emph{$c$-packed} if for any ball $B$ the length of the portion of $\cal P$ contained in $B$ is at most $c$ times the diameter of $B$.
\end{LONG}
\end{INUTILE}
%
Among other examples, we consider the \textsc{Fr\'echet Distance Decision problem}, which consists in deciding whether the \textsc{Fr\'echet Distance} between two curves is equal to a given value $f$.
In 2018, Gudmundsson et al.~\cite{2018-IWISC-FastFrechetDistanceBetweenCurvesWithLongEdges-GudmundssonMirzanezhadMohadesWenk} described an algorithm deciding if the \textsc{Fr\'echet Distance} is equal to a given value $f$ in time linear in the size of the input curves when each edge is longer than the \textsc{Fr\'echet Distance} between those two curves.
\begin{LONG}
Based on this algorithm, they also showed how to approximate the \textsc{Fr\'echet Distance} within a factor of $\sqrt{d}$ in linear time, and preprocessed in linear time in order to support decision queries in time within $O(m\log^2 n)$.
\end{LONG}
Those results easily project to the \textsc{Discrete Fr\'echet Distance}.

%
%
Those results for the mere computation of the \textsc{Discrete Fr\'echet Distance} suggest that one does not always need to compute the $n\times m$ values of the dynamic program. \textbf{Can such approaches be applied to more general instances}, such that the \textbf{area of the dynamic program which \emph{needs} to be computed} measures \textbf{the \emph{difficulty} of the instance}?


In this work we perform a parameterized analysis of the computational complexity of the \textsc{Discrete Fr{\'e}chet Distance}, in function of the area of the dynamic program matrix relevant to the computation, measured by its \emph{certificate width} $\omega$.
After describing summarily the traditional way to compute the \textsc{Discrete Fr\'echet Distance}\begin{CONDITIONALLOWERBOUND}, the outline of the recent conditional lower bound on its computational complexity\end{CONDITIONALLOWERBOUND} and the particular case of long edged curves (Section~\ref{sec:preliminaries}), we describe an optimization of the classical dynamic program based on two simple techniques, banded dynamic programming and thresholding (Section~\ref{sec:adaptiveDynamicProgram}), and we prove that this program runs in time within $O((n+m)\omega)$ and space within $O(n+m)$ (Section~\ref{sec:complexityAnalysis}). \begin{CONDITIONALLOWERBOUND} This complexity is likely to be optimal in the worst case over sequences of similar lengths and certificate width, if the Strong Exponential Time Hypothesis is correct (Section~\ref{sec:parameterizedConditionalLowerBound}).\end{CONDITIONALLOWERBOUND} We conclude with a discussion in Section~\ref{sec:discussion} of how our results generalize those of Gudmundsson et al.~\cite{2018-IWISC-FastFrechetDistanceBetweenCurvesWithLongEdges-GudmundssonMirzanezhadMohadesWenk}, and the potential applications and generalizations of our techniques to other problems where dynamic programs have given good results

\section{Preliminaries}
\label{sec:preliminaries}

Before describing our results, we describe some classical results upon which we build: the classical dynamic program computing the \textsc{Discrete Fr\'echet Distance}\begin{CONDITIONALLOWERBOUND}, the conditional lower bound\end{CONDITIONALLOWERBOUND}, and the ``easy'' case of long-edged curves described by Gudmundsson et al.~\cite{2018-IWISC-FastFrechetDistanceBetweenCurvesWithLongEdges-GudmundssonMirzanezhadMohadesWenk}. 

\subsubsection{Classical Dynamic Program}
\label{sec:classical}

Let $P[1..n]$ and $Q[1..m]$ be sequences of $n$ and $m$ points with $n\geq m$. The \textsc{Discrete Fr\'echet Distance} between $P$ and $Q$ is the minimal width of a traversal of $P$ and $Q$, where the width of a traversal is the maximal distance separating two points $u\in P$ and $v\in Q$, where $u$ and $v$ progress independently, but always forward.

Such a distance is easily computed using classical techniques from dynamic programming. 
\begin{LONG}
The distance between $P[1..n]$ and $Q[1..m]$ can be reduced in constant time to the minimum between the distance between $P[1..n-1]$ and $Q[1..m]$, the distance between $P[1..n]$ and $Q[1..m-1]$, and the distance between $P[1..n-1]$ and $Q[1..m-1]$. 
\end{LONG}
Algorithm~\ref{algo:classical} (page~\pageref{algo:classical}) describes a simple implementation in \texttt{Python}, executing in time within $O(nm)$.
\begin{algorithm}
\caption{Classical algorithm to compute the \textsc{Discrete Fr\'echet Distance} between two sequences of points $P$ and $Q$. The implementation is decomposed in two parts:  the \texttt{computation} function initializes the array of values, which is filled recursively by the \texttt{recursive} function. For the sake of space, the documentation strings and unit tests were not included, but the source file including those is available at \url{https://github.com/FineGrainedAnalysis/Frechet}.
  \label{algo:classical}}
\begin{python}
def recursive(dpA,i,j,P,Q):
    if dpA[i,j] > -1:
        return dpA[i,j]
    elif i == 0 and j == 0:
        dpA[i,j] = distance(P[0],Q[0])
    elif i > 0 and j == 0:
        dpA[i,j] = max(
            recursive(dpA,i-1,0,P,Q),
            distance(P[i],Q[0]))
    elif i == 0 and j > 0:
        dpA[i,j] = max(
            recursive(dpA,0,j-1,P,Q),
            distance(P[0],Q[j]))
    elif i > 0 and j > 0:
        dpA[i,j] = max(
            min(
                recursive(dpA,i-1,j,P,Q),
                recursive(dpA,i-1,j-1,P,Q),
                recursive(dpA,i,j-1,P,Q)),
            distance(P[i],Q[j]))
    else:
        dpA[i,j] = float("inf")
    return dpA[i,j]

def computation(P,Q):
    dpA = numpy.ones((len(P),len(Q)))
    dpA = numpy.multiply(dpA,-1)
    d = recursive(dpA,len(P)-1,len(Q)-1,P,Q)
    return d
\end{python}
\end{algorithm}
While such a simple algorithm also requires space within $O(nm)$, a simple optimization yields a space within $O(n+m)$, by computing the \textsc{Discrete Fr\'echet Distance} between $P[1..i]$ and $Q[1..j]$ for increasing values of $i$ and $j$, one column and row at the time, keeping in memory only the previous column and row.
We describe in Section~\ref{sec:adaptiveDynamicProgram} a more sophisticated algorithm which avoids computing some of the $n\times m$ values computed by Algorithm~\ref{algo:classical}.

\begin{CONDITIONALLOWERBOUND}
\subsubsection{Conditional Lower Bound}
\label{sec:conditionalLowerBound}

In 2005, Williams~\cite{2005-TCS-ANewAlgorithmForOptimal2ConstraintSatisfactionAndItsImplications-Ryan} related the computational complexity of some general NP-hard problems with the parameterized computational complexity of some problems in computational geometry (in particular, the \textsc{Orthogonal Vector} problem in arbitrary dimension $d$), usually solved in polynomial time without a tight computational lower bound.
Over the years, similar relations have been described between other problems~\cite{2014-SODA-FindingOrthogonalVectorsInDiscreteStructures-WilliamsYu} beyond computational geometry, such as the \textsc{String Edit Distance}~\cite{2015-STOC-EditDistanceCannotBeComputedInStronglySubquadraticTimeUnlessSETHIsFalse-BackursIndyk}, and the \textsc{Fr\'echet Distance}~\cite{2014-FOCS-WhyWalkingTheDogTakesTimeFrechetDistanceHasNoStronglySubquadraticAlgorithmsUnlessSETHFails-Bringmann}.

Bringmann~\cite{2014-FOCS-WhyWalkingTheDogTakesTimeFrechetDistanceHasNoStronglySubquadraticAlgorithmsUnlessSETHFails-Bringmann} showed that the worst case complexity of $O(nm)$ is likely to be optimal, as a better computational complexity would imply an algorithm for \textsc{CNF SAT} running in much faster time than usually expected.

\begin{TODO}
DESCRIBE the technique used by Bringmann.
\end{TODO}
\end{CONDITIONALLOWERBOUND}

\subsubsection{Easy instances of the Fr\'echet Distance}
\label{sec:FrechetDistanceBetweenCurvesWithLongEdges}

For various restricted classes of curves, such as 
\begin{LONG}
$\kappa$-bounded, backbone, $c$-packed and
\end{LONG}
long-edged~\cite{2018-IWISC-FastFrechetDistanceBetweenCurvesWithLongEdges-GudmundssonMirzanezhadMohadesWenk} curves, both the \textsc{Fr\'echet Distance} and the \textsc{Discrete Fr\'echet Distance} are known to be easier to compute (or approximate).
\begin{LONG}

A curve $\cal P$ is \emph{$\kappa$-bounded} if for any two points $x,y\in {\cal P}$, the union of the balls with radii $r$ centered at y and $y$ contains the whole ${\cal P}[x,y]$ where $r$ is equal to $(\kappa/2)$ times the Euclidean distance between $x$ and $y$.  Alt et al. described an algorithm to approximate within a factor of $\kappa$ the \textsc{Discrete Fr\'echet Distance} between two such curves of length at most $n$, in time within $O(n\lg n)$.

A \emph{backbone curve} has constant edge length and requires a minimum distance between non-consecutive vertices. Aronov et al. described an algorithm approximating within a factor of $(1+\varepsilon)$ the \textsc{Discrete Fr\'echet Distance} of such curves in time near-linear in the size of the curves.

A curve $\cal P$ is \emph{$c$-packed} if for any ball $B$ the length of the portion of $\cal P$ contained in $B$ is at most $c$ times the diameter of $B$. 

\end{LONG}
%
In 2018, Gudmundsson et al.~\cite{2018-IWISC-FastFrechetDistanceBetweenCurvesWithLongEdges-GudmundssonMirzanezhadMohadesWenk} showed that in the special case where all the edges of the polygonal curve are longer than the \textsc{Fr\'echet Distance}, the latter can be decided (i.e., checking a value of the \textsc{Fr\'echet Distance}) in linear time in the size of the input,\begin{SHORT} and \end{SHORT} computed in time within $O((n+m)\lg(n+m))$\begin{LONG}, approximated within a factor of $\sqrt{d}$ in linear time in the size of the input, and preprocessed in linear time in order to support decision queries in time within $O(m\log^2 n)$\end{LONG}.

In the next section, we describe a quite simple algorithm which not only takes advantage of long edged curves, but of any pair of curves for which a consequent part of the array of the dynamic program can be ignored.

\section{An opportunistic Dynamic Program}
\label{sec:adaptiveDynamicProgram}

We describe an algorithm based on two complementary techniques: 
first, a \emph{banded dynamic program}, which \emph{approximates} the value computed by a classical dynamic program by considering only the values of the dynamic program within a range of width $w$ (for some parameter $w$) around the diagonal (a technique previously introduced for the computation of the \textsc{Edit Distance} between two strings); and 
second, a \emph{thresholding} process, which \emph{accelerates} the computation by cutting the recurrence any time the distance computed becomes larger or equal to a threshold $t$ (for some parameter $t$ corresponding to a distance already achieved by some traversal of the two curves).
The combination of those two techniques, combined with a parametrization of the problem, yields the parameterized upper bound on the computational complexity of the \textsc{Discrete Fr\'echet Distance}.

\subsubsection{Banded Dynamic Program:}
\label{sec:bandedDynamicProgram}

When computing the \textsc{Edit Distance} (e.g., the \textsc{Delete Insert Edit Distance}, or the \textsc{Levenshtein Edit Distance}~\cite{2000-SPIRE-ASurveyOfLongestCommonSubsequenceAlgorithms-BergrothHakonenRaita}) between similar strings $S\in[1..\sigma]^n$ and $T\in[1..\sigma]^m$ (i.e., their \textsc{Edit Distance} $d$ is small), it is possible to compute less than $n\times m$ cells of the dynamic program array, and hence compute the \textsc{Edit Distance} in time within $O(d(n{+}m))\subseteq O(nm)$.
The ``trick'' is based on the following observation: when the distance between the two strings is $d$, the ``paths'' corresponding to $d$ operations transforming $S$ into $T$ in the matrix of the dynamic program errs at most at distance $d$ from the diagonal between the cell $(1,1)$ and the cell $(n,m)$.
Based on this observation, it is sufficient to compute  the number of operations corresponding to paths inside a ``band'' of width $d$ around such a diagonal~\cite{2000-SPIRE-ASurveyOfLongestCommonSubsequenceAlgorithms-BergrothHakonenRaita}.
\begin{TODO}
CHECK that ~\cite{2000-SPIRE-ASurveyOfLongestCommonSubsequenceAlgorithms-BergrothHakonenRaita} is indeed describing the technique of Banded Dynamic Program!
\end{TODO}
This technique needs some adaptation to be applied to the computation of the \textsc{Discrete Fr\'echet Distance} $f$ between two curves, for two reasons: 
first, $f$ is a real number (it corresponds to the Euclidean distance between two points) and not an integer as the number of edition operations, and this number is independent of the number of cells of the dynamic program being computed; and 
second, the definition of the \textsc{Discrete Fr\'echet Distance} is based on a maximum rather than a sum, which actually makes another optimization possible, described in the next paragraph.

\subsubsection{Thresholding:}
\label{sec:thresholding}

Given two sequences of points $P[1..n]$ and $Q[1..m]$, consider the \emph{Euclidean matrix} $E(P,Q)$ of all $n\times m$ distances between a point $u\in P$ and a point $v\in Q$.  Any parallel traversal of $P$ and $Q$ corresponds to a path in $E(P,Q)$ from the top left cell $(1,1)$ to the bottom right cell $(n,m)$; the width of such a traversal is the maximum value in $E(P,Q)$ on this path; and the \textsc{Discrete Fr\'echet Distance} is the minimum width achieved over all such traversals.

Now suppose that, as for the \textsc{Edit Distance} between two similar strings, the traversal of the Euclidean matrix $E(P,Q)$ corresponding to the \textsc{Discrete Fr\'echet Distance} $f$ between $P$ and $Q$ is close to the diagonal from $(1,1)$ to $(n,m)$, and that any traversal diverging from such a path ``encounters'' a pair of points $(u,v)$ at euclidean distance larger than $f$ (in particular, this happens when the two curves are ``long edged'' compared to $f$). Then, some of the values of the cells of the dynamic program matrix outside of this diagonal can be ignored for the computation of the \textsc{Discrete Fr\'echet Distance} between $P$ and $Q$.

In the following paragraph we describe  how to combine those two techniques into an adaptive algorithm taking advantage of ``easy'' instances where a large quantity of cells of the dynamic program can be ignored.

\subsubsection{Combining the two techniques:}
\label{sec:Combination}

The solution described consists of two algorithms: an approximation algorithm~\ref{algo:approximation} which computes a parameterized upper bound on the \textsc{Discrete Fr\'echet Distance}, and a computation algorithm~\ref{algo:adaptive} which calls the previous one iteratively with various parameter values, in order to compute the real \textsc{Discrete Fr\'echet Distance} of the instance.

Algorithm~\ref{algo:approximation} lists an implementation in \texttt{Python} of an algorithm which, given as parameters
two arrays of points $P$ and $Q$,
an \texttt{integer} \emph{width} $w$, and
a \texttt{float} \emph{threshold} $t$; 
computes an upper bound of the \textsc{Discrete Fr\'echet Distance} between $P$ and $Q$, obtained by computing only the cells within a band of width $2w$ around the diagonal from the top left cell $(1,1)$ to the bottom right cell $(n,m)$, and cutting all sequences of recursive calls when reaching a distance of value $t$ or above. 
This algorithm uses space within $(n+m)$ as it computes the values from $(1,1)$ to $(n,m)$ by updating and switching alternatively two arrays of size $n$ and two arrays of size $m$ (respectively corresponding to rows and columns of the dynamic program matrix).
Its running time is within $O(w(n+m))$, as it computes at most $w(n+m)$ cells of the dynamic program array.
Furthermore, it not only returns the value of the upper bound computed, but also a Boolean \texttt{breached} indicating if the border of the banded diagonal has been reached during this computation. When such a border has \emph{not} been reached (and the threshold value $t$ is smaller than or equal to the \textsc{Discrete Fr\'echet Distance} between $P$ and $Q$), the value returned is equal to the \textsc{Discrete Fr\'echet Distance} between $P$ and $Q$.

\begin{algorithm}
\caption{Parameterized Algorithm to approximate the \textsc{Discrete Fr\'echet Distance} between two sequences of points by computing only values of the dynamic program within a band of width $w$ around the diagonal, and limiting the recursion to distances smaller than a threshold $t$.
  \label{algo:approximation}}
\begin{python}
def approximation(P,Q,w,t):
    bReached = False
    n = len(P)
    m = len(Q)
    assert( m <= n )
    assert( m > 0 )
    def e(i,j):
        d = utils.distance(P[i],Q[j])
        if d < t:
            if (i-j) >= w or (j-i) >= w:
                bReached = True
            return d
        else:
            return float("inf")
    oldRow = np.ones(n)
    oldColumn = np.ones(m)
    oldRow[0] = oldColumn[0] = e(0,0)
    for s in range(1,m):
        newRow = np.ones(n)
        for i in range(max(1,s-w+1),s):
            newRow[i]  = max(e(i,s),min(
                oldRow[i],
                oldRow[i-1],
                newRow[i-1]))
        newColumn = np.ones(m)
        for j in range(max(1,s-w+1),s):
            newColumn[j]  = max(e(s,j),min(
                newColumn[j-1],
                oldColumn[j-1],
                oldColumn[j]))
        newColumn[s] = newRow[s] = max(e(s,s),min(
            newRow[s-1],
            newColumn[s-1],
            oldRow[s-1]))
        oldRow = newRow
        oldColumn = newColumn
    for s in range(m,n):
      newColumn = np.ones(m)
      for j in range(max(1,s-w+1),m):
          newColumn[j]  = max(e(s,j),min(
              oldColumn[j],
              oldColumn[j-1],
              newColumn[j-1]))
      oldColumn = newColumn
    return bReached,newRow[n-1]
\end{python}
\end{algorithm}

Algorithm~\ref{algo:adaptive} lists an implementation in \texttt{Python} of an algorithm which, given as parameters two sequences of points $P$ and $Q$, calls the approximation Algorithm~\ref{algo:approximation} on $P$ and $Q$ for widths of exponentially increasing value (by a factor of two). The first call is performed with an infinite threshold (no information is available on the similarity of the curve at this point), but each successive calls use the best upper bound on the \textsc{Discrete Fr\'echet Distance} between $P$ and $Q$ previously computed as a threshold.
\begin{algorithm}
\caption{Adaptive algorithm to compute the \textsc{Discrete Fr\'echet Distance} between two sequences of points, by iteratively approximating it with increasing width, using the value of the previous approximation to potentially reduce the number of distances being computed.
  \label{algo:adaptive}}
\begin{python}
def computation(P,Q):
    if len(P)<len(Q): 
      P,Q = Q,P
    if len(Q) == 0:
       return float("inf")
    w = 1
    bReached,t=approximation(P,Q,w,float("inf"))
    while bReached and w < len(Q):
      w = 2*x
      bReached,t=approximation(P,Q,w,t)
    return t
\end{python}
\end{algorithm}

\begin{LONG}
The intuition of the correctness of Algorithm~\ref{algo:adaptive} is trivial: on the last execution of Algorithm~\ref{algo:approximation}, either all the values of the dynamic program array were computed, or all recursions where stopped before they reached a value which is not computed (because out of the bandwidth). We formalize the argument in the following theorem.
\begin{theorem}
Algorithm~\ref{algo:adaptive} correctly returns the \textsc{Fr\'echet Distance} of the two input sequences of points.
\end{theorem}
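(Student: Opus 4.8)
The plan is to reduce the correctness of Algorithm~\ref{algo:adaptive} to the guarantee already established for Algorithm~\ref{algo:approximation}, and then to argue by a short case analysis on the reason the main loop terminates. Write $F$ for the \textsc{Discrete Fr\'echet Distance} between $P$ and $Q$, and let $\mathrm{approx}(w,t)=(b,r)$ denote the Boolean flag and the value returned by a call to Algorithm~\ref{algo:approximation}. From the discussion preceding the theorem I take as given two facts about Algorithm~\ref{algo:approximation}: (L2, soundness) the returned value is an upper bound, $r\ge F$; and (L1, exactness) if the band border is not breached (i.e.\ $b=\mathtt{False}$) then $r=F$ exactly, provided the threshold did not prune any optimal traversal, which holds as soon as $t\ge F$.

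First I would establish the invariant that every threshold handed to Algorithm~\ref{algo:approximation} satisfies $t\ge F$. This holds for the first call, whose threshold is $+\infty$, and it is preserved inductively: each later threshold is the value $r$ returned by the previous call, which is $\ge F$ by (L2). Consequently no call ever prunes below the true optimum, so a traversal of width $F$ is never destroyed by thresholding, and (L1) is applicable whenever $b=\mathtt{False}$.

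Next I would dispatch termination and then correctness. The width $w$ starts at $1$ and doubles on each iteration, while the loop guard requires $w<m$; hence after at most $\lceil\log_2 m\rceil+1$ iterations the guard fails and the loop is finite. The algorithm returns the value $t$ produced by the last call $\mathrm{approx}(w^\star,t^\star)=(b^\star,t)$, and the loop exits for exactly one of two reasons. If it exits because $b^\star=\mathtt{False}$, then the invariant gives $t^\star\ge F$ and (L1) yields $t=F$. If it exits because $w^\star\ge m$, then the band of width $w^\star$ covers the whole matrix, so banding imposes no restriction and the only restriction is thresholding; since $t^\star\ge F$, the global optimum of width $F$ survives, no traversal has width below $F$, and therefore $t=F$ (and in fact $b^\star=\mathtt{False}$, as no cell can satisfy $|i-j|\ge w^\star$ when $w^\star\ge m$). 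In both cases the returned value equals $F$, which is the claim.

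The step I expect to be the main obstacle is the exact-equality half of (L1), and in particular the borderline case $t=F$. Soundness, termination, and the two-case exit analysis are routine, but asserting $r=F$ (rather than merely $r\ge F$) when $b=\mathtt{False}$ rests on the banded-dynamic-program deviation argument: an unbreached border must certify that some width-$F$ traversal stays strictly inside the band, so that restricting to the band loses nothing. Care is also needed at $t=F$: the pruning test must retain cells at distance exactly $t$ (keep $d\le t$, discard $d>t$) so that an optimal traversal whose width equals the inherited threshold is not accidentally cut; under this convention the equality $r=F$ survives the boundary, whereas a strict test would wrongly return $+\infty$. I would make this convention explicit and check it against the recurrence before concluding.
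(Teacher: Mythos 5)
Your proof follows essentially the same route as the paper's: the paper's proof likewise rests on the two observations that on the last call to Algorithm~\ref{algo:approximation} the threshold ``corresponds to a viable traversal'' (your invariant $t\ge F$, i.e.\ your (L2)), and that either the whole array was computed or $\mathtt{bReached}=\mathtt{False}$ (your two exit cases), concluding that the banded computation simulates the full dynamic program (your (L1)). Your write-up is a more careful execution of the same plan: you make the threshold invariant an explicit induction, you add the termination bound, and you separate soundness ($r\ge F$) from exactness ($r=F$).

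The place where you go beyond the paper is the borderline case $t=F$, and you are right to insist on it: this is a gap in the paper's argument, not in yours. As implemented, Algorithm~\ref{algo:approximation} prunes with the strict test \texttt{d < t}, so a cell at distance exactly $t$ is treated as $+\infty$. If some call returns exactly $F$ while still reporting $\mathtt{bReached}=\mathtt{True}$ (possible: the optimal traversal lies inside the band while other recursions touch its border), then the next call runs with threshold $F$ and prunes \emph{every} optimal traversal --- each one contains a cell at distance exactly $F$ --- hence returns $+\infty$; if that call also reports $\mathtt{bReached}=\mathtt{False}$, Algorithm~\ref{algo:adaptive} outputs $+\infty$ instead of $F$. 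So your convention (keep $d\le t$, prune only $d>t$) is a necessary correction for the theorem to hold, not a pedantic precaution, and the paper's proof silently assumes thresholding at $t\ge F$ is harmless, which is only true under your convention. One minor slip on your side: the parenthetical claim that no cell satisfies $|i-j|\ge w^\star$ once $w^\star\ge m$ can fail when $n$ is sufficiently larger than $m$ (e.g.\ the cell $(n,1)$ whenever $n-1\ge w^\star$); it is not load-bearing provided the band is understood, as in the paper's prose, as surrounding the diagonal of the $n\times m$ rectangle rather than the line $i=j$, since under that reading a width of $m$ does cover the whole matrix, which is what your second exit case actually needs.
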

\begin{proof}
Consider two sequences of points $P$ and $Q$ given as input to Algorithm~\ref{algo:adaptive}.
On the last execution of Algorithm~\ref{algo:approximation},
\begin{itemize}
\item the threshold value corresponds to a viable traversal of the Fr\'echet matrix, and
\item either all the values of the dynamic program array were computed ($w=n$), or all recursions where stopped before they reached a value which is not computed ($\mathtt{bReached}=\mathtt{False}$)).
\end{itemize}
As a consequence, the computation of the band of width $w$ correctly simulates the computation of the whole dynamic program, from which the correctness of the algorithm ensues.
\end{proof}
\end{LONG}
In the next section, we analyze the running time of Algorithm~\ref{algo:adaptive}, and describe a parameterized upper bound on it.

\section{Parameterized  Upper Bound}
\label{sec:complexityAnalysis}

The running time of the approximation Algorithm~\ref{algo:approximation} when given parameter $w$ is clearly within $O(w(n+m))$: it computes within $O(w)$ cells in at most $n+m$ rounds, each in constant time. A finer upper bound taking into account the value of the parameter $t$ requires more hypothesis on the relation between $P$ and $Q$, for which we need to consider the running time of the computation Algorithm~\ref{algo:adaptive}.
We  model such hypothesis on the instance in the form of a \emph{certificate}, and more specifically in the form of a \emph{certificate area} of the Euclidean matrix corresponding to a set of values which suffice to \emph{certify} the value of the \textsc{Discrete Fr\'echet Distance}.
\begin{definition}
Given two sequences of points $P[1..n]$ and $Q[1..m]$ of respective lengths $n$ and $m$ and of \textsc{Discrete Fr\'echet Distance} $f$, a \emph{Certificate Area} of the instance formed by $P$ and $Q$ is an area of the Euclidean matrix of $P$ and $Q$ containing both $(1,1)$ and $(n,m)$, and delimited by two paths (one above and one below), both such that the minimum value on this path is larger than or equal to $f$. The width of such a certificate area is the minimal width of a banded diagonal containing both paths.
\end{definition}

The surface of such an area is a good measure of the difficulty to certify the \textsc{Discrete Fr\'echet Distance}, but the minimal width of such an area lends itself better to an analysis of the running time of the computation Algorithm~\ref{algo:adaptive}:
\begin{definition}
Given two sequences of points $P[1..n]$ and $Q[1..m]$, the \emph{Certificate Width} $\omega$ of $(P,Q)$ is the minimum width of a certificate area, taken over all possible certificate areas of $(P,Q)$. 
\end{definition}

Such a width can be as large as $n+m$ in the worst case over instances formed by sequences of points of respective lengths $n$ and $m$, but the smaller this \emph{certificate width} is , the faster Algorithm~\ref{algo:adaptive} runs:
\begin{theorem}
Given two sequences of points $P[1..n]$ and $Q[1..m]$ forming an instance of certificate width $\omega$, Algorithm~\ref{algo:adaptive} computes the \textsc{Discrete Fr\'echet Distance} between $P$ and $Q$ in time within $O((n{+}m)\omega)$ and space within $O(n{+}m)$.
\end{theorem}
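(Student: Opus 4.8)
The plan is to split the statement into its space bound, which is essentially immediate, and its time bound, which I would reduce to a single geometric-sum estimate once the stopping behaviour of Algorithm~\ref{algo:adaptive} is understood. For space, each invocation of Algorithm~\ref{algo:approximation} keeps alive only four one-dimensional arrays (\texttt{oldRow}, \texttt{newRow} of length $n$ and \texttt{oldColumn}, \texttt{newColumn} of length $m$), so a single call runs in space within $O(n+m)$; since Algorithm~\ref{algo:adaptive} issues its calls sequentially and reuses this workspace, the total space stays within $O(n+m)$.

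For the running time I would first record the per-call cost, already noted in the text: a call with parameter $w$ touches within $O(w)$ cells in each of at most $n+m$ rounds, hence runs in time within $O(w(n+m))$, independently of $t$. Algorithm~\ref{algo:adaptive} issues these calls for the geometrically increasing widths $w = 1, 2, 4, \dots$, stopping as soon as \texttt{bReached} is \texttt{False} (or $w \ge m$). If I can show that the loop necessarily halts at some width $w^\star \in O(\omega)$, the total cost telescopes: summing $O(w(n+m))$ over the doubling widths gives $\sum_{i=0}^{\log_2 w^\star} O(2^i(n+m)) = O(w^\star(n+m)) = O(\omega(n+m))$, a geometric series dominated by its last term. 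The whole argument therefore reduces to a stopping lemma.

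The stopping lemma I would target is: for $w \ge \omega$, the call \texttt{approximation}$(P,Q,w,t)$ reports \texttt{bReached}$=$\texttt{False}, so Algorithm~\ref{algo:adaptive} terminates. To prove it I would fix a certificate area of minimum width $\omega \le w$, so that both of its delimiting paths lie inside the band of width $w$, each carrying minimum value at least $f$ along it. The crux is a monotonicity observation: any staircase traversal from $(1,1)$ to a cell on the band boundary must exit the certificate area and hence cross one of the two delimiting paths, and two monotone staircase paths that cross necessarily share a cell, which carries value at least $f$. Thus every path from $(1,1)$ reaching the boundary passes through a cell of value $\ge f$; once the threshold equals $f$, the \texttt{e} routine maps such cells to $+\infty$, so the banded-and-thresholded program propagates only $+\infty$ to the boundary and never records a boundary cell as part of a finite traversal, which is exactly the condition under which \texttt{bReached} is left \texttt{False}.

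The step I expect to be the main obstacle is making this crossing argument faithful to the code, for two linked reasons. First, I must reconcile the direction of the threshold: the thresholds produced by Algorithm~\ref{algo:adaptive} are \emph{over}-estimates of $f$, so a wall cell of value exactly $f$ need not be cut by the test \texttt{d < t} while $t > f$; I would resolve this by observing that the first call whose band contains the optimal traversal already returns the value $f$, so that after at most one further doubling the threshold equals $f$ and the walls become genuinely blocking, which only enlarges $w^\star$ by a constant factor and keeps it in $O(\omega)$. Second, I must state precisely the lattice-crossing lemma (a monotone path separating $(1,1)$ from a boundary cell cannot be avoided by a monotone traversal) and verify that the certificate's defining property confines an optimal traversal to the band, so that the banded value is not merely an upper bound but exactly $f$; the boundary test inside \texttt{e} then detects precisely the escape of a finite traversal, up to the off-by-one in the index ranges, which affects only constants. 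With the stopping lemma in hand, the geometric-sum bound and the space bound together yield the claimed time within $O((n+m)\omega)$ and space within $O(n+m)$.
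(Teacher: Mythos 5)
The paper states this theorem \emph{without any proof}: the only analysis present is the one-sentence per-call bound for Algorithm~\ref{algo:approximation} ($O(w(n+m))$ time) and, in the long version, a correctness sketch for Algorithm~\ref{algo:adaptive}. So there is no paper proof to compare against, and your proposal must stand on its own. Its skeleton is certainly the intended one and is sound as far as it goes: the space bound from the four linear arrays, the $O(w(n+m))$ cost per call, the geometric sum over doubling widths being dominated by its last term, and the reduction of everything to a stopping lemma for $w$ comparable to $\omega$; the crossing lemma for monotone staircase paths is also the right tool for that lemma.

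However, the two difficulties you flag at the end are genuine gaps, and your proposed resolutions do not close them. First, \emph{confinement}: the paper's definition of a certificate area only requires the two delimiting paths to have all their values at least $f$. This bounds from below the width of any traversal that escapes the area, but it does \emph{not} imply that some optimal traversal stays inside it. One can build instances (e.g., a $4\times 4$ matrix whose inner band consists of large values, whose walls have values exactly $f$ at their endpoints, and whose outside cells are tiny) in which every in-band traversal has width strictly larger than $f$ and the only optimal traversals escape through a wall cell of value exactly $f$; for such instances the banded value strictly over-estimates $f$ for every $w$ comparable to $\omega$, so the loop either stops with a wrong answer or keeps doubling far beyond $\omega$, and the theorem fails under the literal definition. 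This property cannot be ``verified''; it must be added to (or forced by) the definition. Second, your threshold fix is self-defeating, because the call in which the walls ``become genuinely blocking'' (threshold $t=f$) is the very call whose return value Algorithm~\ref{algo:adaptive} outputs. Under the code's strict test (a cell survives only if $d<t$), a threshold of exactly $f$ cuts not only the wall cells but also the bottleneck cell of every optimal traversal, so that final call returns $\infty$ rather than $f$: the loop stops, but the algorithm no longer ``computes the \textsc{Discrete Fr\'echet Distance}.'' Under the opposite convention (keep $d\le t$), optimal traversals survive, but walls whose minimum value is exactly $f$ are never cut, finite values can leak to the band border, and \texttt{bReached} can remain true up to $w=m$. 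The clean repair is to require wall values \emph{strictly} greater than $f$ (which makes confinement automatic, since any escaping traversal is then strictly suboptimal) together with the non-strict cut; with both in place, the call after the first width reaching $\omega$ blocks at the walls, preserves the optimal traversal, and returns $f$, which yields the claimed $O((n+m)\,\omega)$ bound. Without these corrections, the stopping lemma --- the crux of the whole argument --- is not established.
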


\begin{CONDITIONALLOWERBOUND}
\section{Parameterized Conditional Lower Bound}
\label{sec:parameterizedConditionalLowerBound}

\end{CONDITIONALLOWERBOUND}

Beyond the necessity to measure experimentally the certificate width of practical instances of the \textsc{Discrete Fr\'echet Distance}, and the exact running time of Algorithm~\ref{algo:adaptive} on such instances, we discuss some more subtle options for future work in the next section.

\section{Discussion}
\label{sec:discussion}

The results described in this work are by far only preliminary. Among the various questions that those preliminary results raise, we discuss here the relation to the long edged sequences  recently described by Gudmundsson et al.~\cite{2018-IWISC-FastFrechetDistanceBetweenCurvesWithLongEdges-GudmundssonMirzanezhadMohadesWenk}; 
a potential parameterized conditional lower bound matching our parameterized upper bound on the computational complexity of the \textsc{Discrete Fr\'echet Distance};
(the not so) similar results on the \textsc{Orthogonal Vector} decision problem; and 
the possibility of a theory of reductions between parameterized versions of polynomial problems without clear (parameterized or not) computational complexity lower bounds. 

\subsubsection{Relation to Long Edged Sequences:}
\label{sec:relationToLongEdgedSequences}

In 2018, Gudmundsson et al.~\cite{2018-IWISC-FastFrechetDistanceBetweenCurvesWithLongEdges-GudmundssonMirzanezhadMohadesWenk} described an algorithm deciding if the Fr\'echet distance is equal to a given value $f$ in time linear in the size of the input curves when each edge is longer than the \textsc{Fr\'echet Distance} between those two curves.
Algorithm~\ref{algo:adaptive} is more general than Gudmundsson et al.'s algorithm~\cite{2018-IWISC-FastFrechetDistanceBetweenCurvesWithLongEdges-GudmundssonMirzanezhadMohadesWenk}, but it also performs in linear time on long-edged instances: the traversal corresponding to the \textsc{Fr\'echet Distance} of such an instance is along the diagonal, implying a \emph{certificate width} of $1$. See Figures~\ref{fig:longEdgedCurvesEuclideanMatrix},~\ref{fig:longEdgedCurvesFrechetMatrix} and~\ref{fig:longEdgedCurvesDynamicProgramMatrix} for the Euclidean matrix, Fr\'echet Matrix and Dynamic Program Matrix of a random instance formed of $5$ points, each edge of length $100$ with a \textsc{Fr\'echet Distance} of $13.45$  (see Appendix~\ref{sec:randomGenerationOfLongEdgedCurves} for the \texttt{Python} code used to generate long edged instances).

The ratio between the \textsc{Fr\'echet Distance} and the minimal edge length of the curves might prove to be a more ``natural'' parameter than the \emph{certificate width} to measure the ``difficulty'' of computing the \textsc{Fr\'rechet Distance} of a pair of curves: we focused on the \emph{certificate width} in the hope that such a technique can find applications in the analysis of other problems on which dynamic programming has yield good solutions.

\begin{figure}
\begin{minipage}[t]{.45\linewidth}
$$\begin{array}{  *{ 6 }{c  }} 
\mathbf{1.41} &	101.43         & 193.97        & 294.66       &	199.17         & 227.36       \\ 
100.14        &	\mathbf{13.45} & 94.67         & 195.37       &	99.5           & 142.89       \\ 
199.48        &	97.99          & \mathbf{5.39} & 105.43       &	6.13           & 113.25       \\ 
290.44        &	192.56         & 104.04        & \mathbf{6.0} &	97.26          & 109.6        \\ 
193.23        &	93.14          & 13.17         & 104.98       &	\mathbf{10.44} & 99.05        \\ 
232.69        &	156.27         & 112.89        & 104.58       &	107.64         & \mathbf{6.4} \\ 
\end{array}$$
\caption{Euclidean Matrix for a Long Edged Instance: the $6$ points from the first curve were randomly generated at distance $100$ of each other, while the points from the second curve were generated by perturbing within a distance of $10$ from the points of the first curve.
\label{fig:longEdgedCurvesEuclideanMatrix}}
\end{minipage}
\hfill
\begin{TODO}
FIND how to plot the curves of 5 points!
\end{TODO}
\begin{minipage}[t]{.45\linewidth}
$$\begin{array}{  *{ 6 }{c  }} 
\mathbf{1.41} &	101.43         & 193.97 	& 294.66         & 294.66         & 294.66         \\ 
100.14        &	\mathbf{13.45} & 94.67          & 195.37         & 195.37         & 195.37         \\ 
199.48        &	97.99          & \mathbf{13.45} & 105.43         & 105.43         & 113.25         \\ 
290.44        &	192.56         & 104.04 	& \mathbf{13.45} & 97.26          & 109.6          \\ 
290.44        &	192.56         & 104.04 	& 104.98         & \mathbf{13.45} & 99.05          \\ 
290.44        &	192.56         & 112.89 	& 104.58         & 107.64         & \mathbf{13.45} \\ 
\end{array}$$
\caption{Fr\'echet Matrix for the same Long Edged Instance as Figure~\ref{fig:longEdgedCurvesEuclideanMatrix}: the traversal corresponding to the \textsc{Fr\'echet Distance} of the instance is along the diagonal (highlighted in bold here), resulting in a \textsc{Fr\'echet Distance} of $13.45$. \label{fig:longEdgedCurvesFrechetMatrix}}
\end{minipage}
\hfill
\begin{minipage}[t]{.45\linewidth}
$$\begin{array}{  *{ 6 }{c  }} 
\mathbf{1.41} &	inf            & inf            & -1.0          & -1.0          & -1.0          \\ 
inf           &	\mathbf{13.45} & inf            & inf            & -1.0          & -1.0          \\ 
inf           &	inf            & \mathbf{13.45} & inf            & 6.13           & -1.0          \\ 
-1.0         &	inf            & inf            & \mathbf{13.45} & inf            & inf            \\ 
-1.0         &	-1.0          & 13.17          & inf            & \mathbf{13.45} & inf            \\ 
-1.0         &	-1.0          & -1.0          & inf            & inf            & \mathbf{13.45} \\ 
\end{array}$$
\caption{Dynamic Program Matrix for the same Long Edged Instance as Figure~\ref{fig:longEdgedCurvesEuclideanMatrix}, with width $3$ and threshold $20$: ``inf'' denotes interrupted recurrences because the distance found is already larger than the threshold, meanwhile values outside of the band of width $3$ are marked with ``-1''.
 \label{fig:longEdgedCurvesDynamicProgramMatrix}}
\end{minipage}
\hfill
\begin{minipage}[t]{.45\linewidth}
$$\begin{array}{  *{ 6 }{c  }} 
\mathbf{ 9.43 } 	&	19.48 	&	19.48 	&	-1.0 	&	-1.0 	&	-1.0 	\\ 
18.81 	&	\mathbf{ 11.31 } 	&	11.31 	&	16.86 	&	-1.0 	&	-1.0 	\\ 
18.81 	&	14.26 	&	\mathbf{ 11.31 } 	&	11.31 	&	inf 	&	-1.0 	\\ 
-1.0 	&	inf 	&	16.07 	&	\mathbf{ 11.31 } 	&	16.5 	&	13.77 	\\ 
-1.0 	&	-1.0 	&	inf 	&	11.31 	&	\mathbf{ 11.31 } 	&	11.31 	\\ 
-1.0 	&	-1.0 	&	-1.0 	&	3.17 	&	14.06 	&	\mathbf{ 11.31 } 	\\ 
\end{array}$$
\caption{Dynamic Program Matrix for a general instance. 
The  $6$ points from the first curve were randomly generated at distance $10$ of each other, the points from the second curve by perturbing within a distance of $10$  the points of the first curve.
The computation of the matrix is performed
with width $3$ and threshold $20$ as before.
\label{fig:generalDynamicProgramMatrix}}
\end{minipage}
\begin{INUTILE}
\hfill
\begin{minipage}[t]{.45\linewidth}
$$\begin{array}{  *{ 6 }{c  }} 
    \mathbf{ 46.0 } 	&	46.0 	&	46.0 	&	-1.0 	&	-1.0 	&	-1.0 	\\ 
    67.0 	&	\mathbf{ 46.0 } 	&	63.0 	&	70.0 	&	-1.0 	&	-1.0 	\\ 
    inf 	&	56.0 	&	\mathbf{ 46.0 } 	&	inf 	&	inf 	&	-1.0 	\\ 
    -1.0 	&	71.0 	&	inf 	&	\mathbf{ 46.0 } 	&	76.0 	&	inf 	\\ 
    -1.0 	&	-1.0 	&	59.0 	&	46.0 	&	\mathbf{ 66.0 } 	&	66.0 	\\ 
    -1.0 	&	-1.0 	&	-1.0 	&	inf 	&	inf 	&	\mathbf{ 66.0 } 	\\ 
  \end{array}$$
\caption{Dynamic Program Matrix for a general instance (not a  Long Edged one). The coordinates of the $6$ points of both curves being generated i.i.d. in a grid of $100\times 100$, while the computation of the matrix is performed with
with width $3$ and threshold $80$, so that to show both the effect of banded dynamic programming and thresholding.
\label{fig:generalDynamicProgramMatrix}}
\end{minipage}
\end{INUTILE}
\end{figure}

\subsubsection{Parameterized Conditional Lower Bound}
\label{sec:parameterizedConditionalLowerBound}

The original motivation for this work was to prove a parameterized conditional lower bound on the computational complexity of the \textsc{Discrete Fr\'echet Distance} as a step-stone for doing the same for the computation of various \textsc{Edit Distances}. The first step in this direction was the identification of a parameter for this problem: the \emph{certificate width}, that seems to be a good candidate. The next step is to refine the reduction from \textsc{CNF SAT} to the \textsc{Discrete Fr\'echet Distance} described by Bringmann~\cite{2014-FOCS-WhyWalkingTheDogTakesTimeFrechetDistanceHasNoStronglySubquadraticAlgorithmsUnlessSETHFails-Bringmann}, in order to define a reduction from (a potential parameterized version of) \textsc{CNF SAT} to a parameterized version of the \textsc{Discrete Fr\'echet Distance}.

\subsubsection{Parameterized Upper and Lower Bound for the  computation of Orthogonal Vectors:}
\label{sec:orthogonalVectors}

Bringmann~\cite{2014-FOCS-WhyWalkingTheDogTakesTimeFrechetDistanceHasNoStronglySubquadraticAlgorithmsUnlessSETHFails-Bringmann} mentions that the reduction from \textsc{SAT CNF} to the computation of the \textsc{Discrete Fr\'echet Distance} is similar to Williams' reduction from \textsc{SAT CNF} to the (polynomial) problem of deciding if two sets of vectors contain an \textsc{Orthogonal Vector} pair, and that there might be a reduction from the \textsc{Orthogonal Vector} decision problem to the computation of the \textsc{Discrete Fr\'echet Distance}. This mention called the \textsc{Orthogonal Vector} decision problem to our attention, and in particular 1) the possibility of a parametrization of the analysis of this problem, and 2) a potential linear (or parameterized) reduction from such a parameterized \textsc{Orthogonal Vector} decision problem to the parameterized computation of the \textsc{Discrete Fr\'echet Distance} described in this work.
It turns out that there exists an algorithm solving the \textsc{Orthogonal Vector} decision problem in time within $O((n{+}m)(\delta+\log(n)+\log(m)))$, where $n$ and $m$ are the respective sizes of the sets of vectors forming the instance, and $\delta$ is the \emph{certificate density} measuring the proportion of cells from the the dynamic program which are sufficient to compute in order to certify the answer to the program. The reduction of this to the \textsc{Discrete Fr\'echet Distance} will be more problematic: the two measures of difficulty seem completely unrelated. 

\subsubsection{A theory of reduction between polynomial parameterized problems}
\label{sec:reductionTheory}

The study of the \emph{parameterized complexity} of NP-hard problems~\cite{2006-BOOK-ParameterizedComplexityTheory-FlumGrohe,2004-CCC-ParameterizedComplexityOfConstraintSatisfactionProblems-Marx} yields a theory of reduction between pairs formed by a decision problem $P$ and a parameter $k$.
The study of \emph{adaptive sorting} algorithms~\cite{1992-ACMCS-ASurveyOfAdaptiveSortingAlgorithms-EstivillCastroWood,1992-ACJ-AnOverviewOfAdaptiveSorting-MoffatPetersson} yields a theory of reductions between parameters measuring the existing disorder in an array to be sorted (which can also be seen as a theory of reductions between pairs of problems and parameters, but where all the problems are equal).
Considering the theory of reductions between polynomial problems such as the \textsc{Discrete Fr\'echet Distance}, the various \textsc{Edit Distances} between strings, the \textsc{Orthogonal Vector} decision problem, and many others, one can imagine that it would be possible to define a theory of reductions between parameterized versions of these problems.

\medskip \textbf{Acknowledgements:} The author would like to thank Travis Gagie for interesting discussions and for pointing out Gudmundsson et al.'s work~\cite{2018-IWISC-FastFrechetDistanceBetweenCurvesWithLongEdges-GudmundssonMirzanezhadMohadesWenk}.
\textbf{Funding:} J\'er\'emy Barbay is partially funded by the project Fondecyt Regular no. 1170366 from Conicyt.
%
%
%
\textbf{Data and Material Availability:}
The source of this article, along with the unabridged code 
will be made publicly available upon publication at the url \url{https://github.com/FineGrainedAnalysis/Frechet}.

\bibliographystyle{splncs04} 
\bibliography{/home/jbarbay/EverGoing/Bibliography/bibliographyDatabaseJeremyBarbay,/home/jbarbay/EverGoing/Publications/publications-ExportedFromOrgmode-Barbay}

\appendix

\section{Additional Algorithms of interest}
\label{sec:additionalAlgorithms}

Even though the source of this article, along with the unabridged code will be made publicly available upon publication at the url \url{https://github.com/FineGrainedAnalysis/Frechet}, we copy here some extracts of the code, which can help the reader to understand the implementation and the experiments performed.

\subsection{Dynamic Program Matrix}
\label{sec:dynamicProgramMatrix}

We describe in Algorithm~\ref{algo:matrixApproximation} the \texttt{Python} implementation of an algorithm to approximate the \textsc{Discrete Fr\'echet \textbf{Matrix}} between two sequences of points, by computing only values of the dynamic program within a band of width $w$ around the diagonal, and limiting the recursion to distances smaller than a threshold $t$. The interest of this algorithm is not its running time, which is within $\Theta(n\times m)$ given that it initializes and returns the whole matrix of the dynamic program, but in the pedagogical value of its output, which yields a visualization of the space explored by Algorithm~\ref{algo:adaptive}.
In particular, this algorithm was used in Section~\ref{sec:relationToLongEdgedSequences} to generate the matrices presented in Figures~\ref{fig:longEdgedCurvesEuclideanMatrix},~\ref{fig:longEdgedCurvesFrechetMatrix} and~\ref{fig:longEdgedCurvesDynamicProgramMatrix}.

\begin{algorithm}
\caption{Adaptive algorithm to approximate the \textsc{Discrete Fr\'echet \textbf{Matrix}} between two sequences of points.
  \label{algo:matrixApproximation}}
\begin{python}
def approximationOfFrechetMatrix(P,Q,w,t):
    bReached = False
    def e(i,j):
        d = distance(P[i],Q[j])
        if d < t:
            if (i-j) >= w or (j-i) >= w:
                bReached = True
            return d
        else:
            return float("inf")
    assert( len(Q) <= len(P) )
    assert( len(Q) > 0 )
    dpA = np.ones((len(P),len(Q)))
    dpA = np.multiply(dpA,-20)
    # Initialize "old" arrays
    dpA[0,0] = e(0,0)
    for i in range(1,min(w,len(P))):
        dpA[i,0] = max(e(i,0),dpA[i-1,0])
    for j in range(1,min(w,len(Q))):
        dpA[0,j] = max(e(0,j),dpA[0,j-1])
    # Compute values in the leftmost square
    for s in range(1,len(Q)):
        for i in range(max(1,s-w+1),s):
            dpA[i,s]  = max(e(i,s),min(
                dpA[i,s-1],
                dpA[i-1,s-1],
                dpA[i-1,s]
            ))
        for j in range(max(1,s-w+1),s):
            dpA[s,j]  = max(e(s,j),min(
                dpA[s-1,j],
                dpA[s-1,j-1],
                dpA[s,j-1]
            ))
        dpA[s,s]  = max(e(s,s),min(
            dpA[s-1,s],
            dpA[s-1,s-1],
            dpA[s,s-1]
        ))
    # Compute values in the rest of the array
    for s in range(len(Q),len(P)):
      for j in range(max(1,s-w+1),len(Q)):
          dpA[s,j]  = max(e(s,j),min(
              dpA[s-1,j],
              dpA[s-1,j-1],
              dpA[s,j-1]
          ))
    return bReached,dpA
\end{python}
\end{algorithm}

\subsection{Random Generation of Long Edged Curves}
\label{sec:randomGenerationOfLongEdgedCurves}

We describe in Algorithm~\ref{algo:generationOfLongEdgedInstances} the \texttt{Python} code used to generate long edged instances in Figures~\ref{fig:longEdgedCurvesEuclideanMatrix},~\ref{fig:longEdgedCurvesFrechetMatrix} and~\ref{fig:longEdgedCurvesDynamicProgramMatrix} of Section~\ref{sec:relationToLongEdgedSequences}.

\begin{TODO}
DEVELOP the description of the algorithm generating long edged instances.
\end{TODO}

\begin{algorithm}
\caption{Code used to generate long edged instances in Figures~\ref{fig:longEdgedCurvesEuclideanMatrix},~\ref{fig:longEdgedCurvesFrechetMatrix} and~\ref{fig:longEdgedCurvesDynamicProgramMatrix} of Section~\ref{sec:relationToLongEdgedSequences}.
  \label{algo:generationOfLongEdgedInstances}}
\begin{python}
def randomPointOnUnitCircle():
    """Generates a random point on the unit circle.
    """
    angleInRadians = random.random() * 2.0 * math.pi
    dx = math.cos(angleInRadians)
    dy = math.sin(angleInRadians)
    return (dx,dy)
def randomLongEdgedCurve(n,edgeLength=10):
    """Generates a random curve of $n$ points 
    such that each edge is of length $edgeLength$

    >>> P = randomCurve(20,10)
    >>> Q = randomlyPerturbedCurve(P,1)

    """
    x,y = randomPointOnUnitCircle()
    P = [(x,y)] 
    for i in range(n):
        dx,dy = randomPointOnUnitCircle()
        x += dx * edgeLength
        y += dy * edgeLength
        P.append((x,y))
    return P
def randomlyPerturbedCurve(P,d=1):
    """Generates a perturbation @Q of a curve @P, 
so that the distance from @P to @Q is not as big 
as if generated separately.

    >>> P = [(1,1),(1,2),(1,3),(1,4),(1,5),(1,6)]
    >>> Q = randomlyPerturbedCurve(P,1)
    """
    Q = []
    for x,y in P:
        xp = x+random.randint(-d, d)
        yp = y+random.randint(-d, d)
        Q.append((xp,yp))
    return Q
\end{python}
\end{algorithm}

\begin{TODO}
\section{Adaptive Solution for the Orthogonal Vector Problem}
\label{sec:orthogonalVector}

\end{TODO}


\end{document}